\theoremstyle{plain}% default
\newtheorem{thm}{Theorem}
\newtheorem{lem}[thm]{Lemma}
\newtheorem{prop}[thm]{Proposition}
\theoremstyle{definition}
\newtheorem{defn}[thm]{Definition}
\newtheorem{conj}[thm]{Conjecture}
\definecolor{nblue}{rgb}{0.2,0.2,0.7}
\definecolor{ngreen}{rgb}{0.1,0.5,0.1}%272
\definecolor{nred}{rgb}{0.8,0.2,0.2}%711
\definecolor{nblack}{rgb}{0,0,0}
\DeclareMathOperator{\tr}{\mathrm{Tr}}
\newcommand{\mc}[1]{\mathcal{#1}}
\newcommand{\bc}[1]{\bm{\mathcal{#1}}}
\newcommand{\mbb}[1]{\mathbb{#1}}
\newcommand{\inner}[2]{\left\langle #1 ,#2 \right\rangle}
\newcommand{\ct}{^{\dagger}}
\newcommand{\hidden}[1]{}
\DeclareMathOperator*{\argmax}{argmax}
\begin{document}
\title{
	From randomized benchmarking experiments to gate-set circuit fidelity:
	\\
	{{how to interpret randomized benchmarking decay parameters}}
}
	\author{Arnaud Carignan-Dugas}
	\email{arnaud.carignan@gmail.com}
	\affiliation{Institute for Quantum Computing and the Department of Applied
		Mathematics, University of Waterloo, Waterloo, Ontario N2L 3G1, Canada}
	\author{Kristine Boone}
	\affiliation{Institute for Quantum Computing and the Department of Applied
			Mathematics, University of Waterloo, Waterloo, Ontario N2L 3G1, Canada}
	\author{Joel J. Wallman}
	\affiliation{Institute for Quantum Computing and the Department of Applied
		Mathematics, University of Waterloo, Waterloo, Ontario N2L 3G1, Canada}
	\author{Joseph Emerson}
	\affiliation{Institute for Quantum Computing and the Department of Applied
		Mathematics, University of Waterloo, Waterloo, Ontario N2L 3G1, Canada}
	\affiliation{Canadian Institute for Advanced Research, Toronto, Ontario M5G 1Z8, Canada}
\date{\today}
\begin{abstract}

Randomized benchmarking (RB) protocols have become an essential tool for
providing a meaningful partial characterization of experimental quantum
operations. {While the RB decay rate is known to enable estimates of the average
fidelity of those operations under gate-independent Markovian noise, under gate-dependent noise this rate is more difficult to interpret rigorously.} In this paper, we prove that {single-qubit RB decay parameter $p$ coincides with the decay parameter of the}{ \emph{gate-set circuit fidelity}}, a
novel figure of merit which characterizes the expected average fidelity over
arbitrary circuits of operations from the gate-set. We also prove that, in the
limit of high-fidelity single-qubit experiments, 
{the possible alarming disconnect between the average gate fidelity and RB experimental results
is simply explained by} a basis mismatch between the gates and
the state-preparation and measurement procedures, that is, to a unitary degree
of freedom in labeling the Pauli matrices. Based on numerical evidence and
physically motivated arguments, we conjecture that these results also hold for higher
dimensions.

\end{abstract}

\maketitle

\section{Introduction}

The operational richness of quantum mechanics hints at an unprecedented
computational power. {However, this very richness carries over to a vast range of
possible quantum error processes for which a full characterization is impractical for even a handful of quantum bits
(qubits).} Randomized
benchmarking (RB) experiments \cite{Emerson2005,Levi2007, Knill2008,Dankert2009, Magesan2011, Magesan2012a,Dugas2015,Cross2016} were introduced to provide
a robust, efficient, scalable, SPAM-independent\footnote{SPAM stands for ``State preparation and measurement''.}, partial characterization of specific sets of quantum operations of interest, referred to as gate-sets. Such experiments have been widely adopted across all platforms for quantum computing, eg. \cite{Gaebler2012,Corcoles2013,Kelly2014,Barends2014,Casparis2016,Takita2016,Sheldon2016,McKay2016,McKay2017}, and have become a critical tool for characterizing and improving the design and control of quantum bits (qubits).

Recently it has been shown that RB experiments on an arbitrarily large number of qubits will always produce an exponential decay under arbitrary Markovian error models {(that is, where errors are represented as completely-positive maps)}. This ensures a well-defined theoretical characterization of these experiments and enables an important test for the presence of non-Markovian errors, in spite of the gauge freedom between the experimental quantities and a theoretical figure of merit such as the average gate fidelity \cite{Proctor2017,Wallman2017, Merkel2018}. However, this theoretical advance still lacks a clear physical interpretation that
rigorously connects the experimentally observed decay to
a fidelity-based characterization of a physical set of gate-
dependent errors. {Linking an experimentally measured quantity to a physically meaningful figure of merit is not a mere intellectual satisfaction. It is crucial to ensure that a quantity measured in the context of a process characterization protocol indeed yields an outcome that assesses the quality of 
	operations. What if a very noisy quantum device could yield a decent RB parameter?  What if there exist scenarios where RB substantially underestimates the quality of a quantum device?}

In this paper, we show that in the regime of high fidelity gates on single qubits, {a simple physical} interpretation {of RB data} does exist {and allows a reliable characterization of quantum operations}. 
 Further we conjecture, based on numerical evidence, that such an interpretation extends to arbitrary dimensions. {Consequently, this work provides the theoretical foundation behind a fundamental tool for identifying and eliminating errors through examining the results of RB experiments.}

Consider a {targeted} ideal gate-set $\mbb G=\{\mc G\}$ and its noisy implementation $\tilde{\mbb G}=\{\tilde{\mc G}\}$. We denote a circuit composed of $m$ elements by
\begin{align}
\tilde{\mc G}_{m:1}:=
\tilde{\mc G}_m \cdots \tilde{\mc G}_2 \tilde{\mc G}_1~.
\end{align}
For leakage-free RB experiments with arbitrarily gate-dependent (but still Markovian) errors, the average probability of an outcome $\mu$ after preparing a state $\rho$ and applying a circuit of $m+1$ operations that multiply to the identity is~\cite{Wallman2017,Merkel2018}
\begin{align}\label{eq:intrinsic_RB}
\mbb E_{\mc G_{m+1:1}}{\inner{\mu}{\tilde{\mc G}_{m+1:1} (\rho)}}
= A p^m+B+\epsilon(m),
\end{align}
{where $\inner{M_1}{M_2}:= \tr {M_1^\dagger M_2}$ refers to the Hilbert-Schmidt inner product. On the right-hand side of \cref{eq:intrinsic_RB},} $A$ and $B$ are independent of $m$ (i.e., they only depend upon $\rho$,
$\mu $ and $\tilde{\mc G}$) and $\epsilon(m)$ is a perturbative term that decays
exponentially in $m$.

By design, RB gives some information about the error rate of motion-reversal (i.e.,
identity) circuits composed of gate-set elements. In this paper, we show how this information relates to general
circuits. Consider the traditional notion of \emph{average fidelity} for a noisy circuit $\tilde{\mc C}$ to
a {target} unitary circuit $\mc C$,
\begin{align}\label{eq:average_circuit_fidelity}
{\rm F}(\tilde{\mc C}, \mc C ) & :=  \int {\inner{\tilde{\mc C}(\psi)}{\mc C(\psi)}}d \psi,
\end{align}
where the integral is taken uniformly over all pure states.
%Similarly, we denote the circuit average infidelity as $r(\tilde{\mc C}, \mc C)=1-{\rm F}(\tilde{\mc C}, \mc C)$.
\Cref{eq:average_circuit_fidelity} corresponds to the definition of the usual notion of \emph{average gate fidelity}, but is instead formulated in terms of ``circuit'', which is to be understood as a sequence of elementary gates.
We introduce this nuance to define a novel figure of merit, the \emph{gate-set circuit fidelity}, which compares all possible sequences of $m$ implemented operations from the gate-set $\tilde{\mbb G}$ to their {targets} in $\mbb G$,
\begin{defn}[{\emph{gate-set circuit fidelity}}]
	\begin{align}\label{eq:avg_circuit_fid_construct}
	\mc F(\tilde{\mbb G}, \mbb G, m):=
	\mbb E \left[{\rm F}(\tilde{\mc G}_{m:1}, {\mc G}_{m:1} )\right]~.
	\end{align}
\end{defn}
The case $m=1$ yields the average fidelity of the
gate-set $\tilde{\mbb G}$ to ${\mbb G}$.
In general, the overall action of ideal circuits
${\mc G}_{m:1}$ is reproduced
by
$ \tilde{\mc G}_{m:1}$
with fidelity $\mc F (\tilde{\mbb G},\mbb G, m)$. Having access to the gate-set circuit fidelity enables going beyond quantifying the quality of gate-set elements as it also quantifies the quality of circuits based on those elements.
In this paper, we prove that for all single-qubit gate-sets with fidelities
close to 1 {and for an appropriately chosen targeted gate-set $\mbb G$}, the gate-set circuit fidelity can be closely  estimated via RB
experiments, for all circuit lengths $m$, {even in cases of highly gate-dependent noise models}. {This is possible because it turns out that $\mc F(\tilde{\mbb G}, \mbb G, m)$ essentially behaves like an exponential decay in $m$, uniquely parametrized by the RB decay constant $p$.} {The robust inclusion of gate-dependence is a major step forward since
	it encompasses very realistic noise models.} We conjecture this result to hold for higher dimensions, based
on numerical evidences and physically motivated arguments. 
 {Notice that the gate-set circuit fidelity
 	quantifies the expected fidelity of \emph{all} circuits (built from gate-set elements), and not only motion-reversal ones. This is an important observation to keep in mind because although RB experiments intrinsically revolve around motion-reversal circuits, the figure of merit that it yields isn't limited to such paradigm. Quantifying the quality of all 
 	circuits is much more useful than quantifying identity ones.}
\section{The dynamics of the gate-set circuit fidelity}\label{sec:RB_marginal}

It follows from the RB literature \cite{Emerson2005,Magesan2011} that for gate-independent noise models of the form $\tilde{\mbb G}=\mc E \mbb G$ or ${\tilde{\mbb G}= \mbb G \mc E}$, where $\mc E$ is a fixed error, the gate-set circuit fidelity behaves exactly as
\begin{align}\label{eq:gate-set_fid_gate_indep}
\mc F(\tilde{\mbb G},\mbb G, m)=\frac{1}{d}+\frac{d-1}{d}p^m~,
\end{align}
where $p$ is estimated through standard RB by fitting to \cref{eq:intrinsic_RB}
with $\epsilon(m) = 0$
and $d$ is the dimension of the system. {The relationship between the survival probability decay curve and the decay in \cref{eq:gate-set_fid_gate_indep} shouldn't be surprising. Indeed, consider a RB experiment with a noise model of the form 
$\mc E \mbb G$ and a perfect inversion
step $\mc G_{m+1} \in \mbb G$ and perfect SPAM. In such case, the gate-set circuit fidelity
and the survival probability exactly coincide. A less trivial matter is to show
the link between the RB decay parameter and \cref{eq:gate-set_fid_gate_indep}
for gate-dependent
leakage-free noise models for which the choice of targeted gate-set is
to be treated more carefully. In fact, as we will show,
a poor choice of targeted gate-set can lead to 
a strong violation of \cref{eq:gate-set_fid_gate_indep} in the sense that $1-\mc F(\tilde{\mbb G},\mbb G, m)$ can relatively differ from $1-(\frac{1}{d}+\frac{d-1}{d}p^m)$ by multiple orders of magnitude. {An appropriate choice of targeted gate-set will essentially restore
	the decay relation shown in \cref{eq:gate-set_fid_gate_indep}.}

\Cref{eq:gate-set_fid_gate_indep} generalizes to
\begin{align}\label{eq:gate-set_fid_gate_dep}
\mc F(\tilde{\mbb G},\mbb G, m)=\frac{1}{d}+\frac{d-1}{d}f_{\rm tr}(\tilde{\mbb G}, \mbb G, m)~,
\end{align}
where the fidelity on the traceless hyperplane is similar to the
gate-set circuit fidelity, but is
averaged over the traceless part of the pure states, $\psi_{\rm tr}= \psi-~\mbb I /d$:
\begin{align}\label{eq:Bloch_fid}
 f_{\rm tr} (\tilde{\mbb G} ,\mbb G, m) :=
\text{\scalebox{0.9}{$\frac{\mbb E \left( \int {\inner{\tilde{\mc G}_{m:1}(\psi_{\rm tr})}{
	{\mc G}_{m:1} (\psi_{\rm tr})}}  d \psi \right)}
{\int {\inner{\psi_{\rm tr}}{\psi_{\rm tr}}} d \psi }$} .}
\end{align}
{The integrand in the numerator of the right-hand side of \cref{eq:Bloch_fid} can be visualized as the fidelity restricted on the Bloch space, comparing the
ideally mapped Bloch vectors $\psi_{\rm tr} \rightarrow {\mc G}_{m:1} (\psi_{\rm tr})$ to their noisy analog $\tilde{\mc G}_{m:1} (\psi_{\rm tr})$. \Cref{eq:gate-set_fid_gate_dep}
is quickly obtained by realizing that the symmetric integral over the Bloch space $\int \psi_{\rm tr} d \psi =0$. }

{Under gate-dependent noise, $1-f_{\rm tr}(\tilde{\mbb G},\mbb G,1)$ could
 relatively differ from $1-p$~ by several orders of magnitude \cite{Proctor2017,Qi2018}. Such discrepancy was
  seen as a serious concern: the observed RB decay
 seemingly fails in characterizing the quality of quantum operations.
 To see the possible immense disconnect between $p$ and $f_{\rm tr}(\tilde{\mbb G},\mbb G,1)$, consider the canonical example where single-qubit gates
are perfectly implemented, but differ from the targets $\mc G \in \mbb G$
by a labeling of the Pauli axes:
\begin{subequations}
	\begin{align}
	\tilde{\mc G}(X)&= \mc G (Y)~, \label{eq:modelA}\\
	\tilde{\mc G}(Y)&= \mc G (Z)~, \label{eq:modelB} \\
	\tilde{\mc G}(Z)&= \mc G (X)~. \label{eq:modelC}
	\end{align} 
\end{subequations}
This noise model would lead to an abscence of decay in the survival
probability, that is $p=1$. Indeed, motion-reversal circuits
are perfectly implementing the identity gate, regardless of the length of the circuit.
A quick calculation results in $f_{\rm tr}(\tilde{\mbb G},\mbb G,m)=0$, which demonstrates a difference in orders of magnitude $|\log(1-p)-\log(1-f_{\rm tr}(\tilde{\mbb G},\mbb G,1))|$ that tends to infinity as $p \rightarrow 1$.
 The RB experiment indicates no operational error while the average gate fidelity indicates $1/2$. Does the outcome of RB massively underestimate the error? Notice that since the implementation
 error is a permutation of labels, there is actually
 no observable error in the device. The alarmingly low value of gate-set circuit fidelity of $\tilde{\mbb G}$ to $\mbb G$
 is simply a consequence of a poor choice of targeted gate-set. 
}
\\
\\
 {As a more involved example}, let {the noise model be} $\tilde{\mbb
G}= \mc U \mbb G \mc U ^\dagger$ for any non-identity unitary channel $\mc
U$ {and let the set of targeted operations be $\mbb G$} {(this includes our previous mislabeling example as a special scenario)}. In such cases $f_{\rm tr}(\tilde{\mbb G},\mbb G, 1)$ {can take any value in the interval $[0,1)$, depending on the choice of $\mc U$.} {However, using the same argument as in the previous example,
the survival probability is not subject to a decay ($p=1$),
showing once again how the decay parameter could arbitrarily differ from a poorly defined average gate fidelity.
This apparent disconnect
arises due to a \textit{basis mismatch} between the bases in which
the noisy gate-set and the targeted gate-set are defined. A reconciliation of the RB observations with a gate-set circuit fidelity is obtained by changing the set of targets to $\mc U \mbb G \mc U^\dagger$ since $f_{\rm tr}(\tilde{\mbb G},\mc U \mbb G \mc U^\dagger, 1) = 1$. One might argue that 
implementing $\tilde{\mbb G} = \mc U \mbb G \mc U^\dagger$ 
instead of the ideal $\mbb G$
should raise an operational error. Not necessarily: consider a circuit uniquely constructed from operations $\tilde{\mc G}_i \in \tilde{\mbb G}$. According to Born's rule, the probability of measuring the outcome $i$ associated with the positive operator $\mu_i$ after performing the circuit on a state $\rho$ is:
\begin{align}
	p_i&= \inner{\mu_i}{\tilde{\mc G}_{m:1} (\rho)} \notag \\
	   &= \inner{\mu_i}{\mc U \mc G_m \mc U^\dagger \cdots \mc U \mc G_2 \mc U^\dagger \mc U \mc G_1 \mc U^\dagger(\rho)} \notag \\
	   &= \inner{\mu_i}{\mc U \mc G_{m:1} \mc U^\dagger(\rho)} \notag \\
	   &= \inner{\mu_i'}{ \mc G_{m:1} (\rho')}~,
\end{align}
where $\rho'= \mc U^\dagger(\rho)$, $\mu_i'= \mc U^\dagger(\mu_i)$. That is, the error can be interpreted as part of SPAM procedures instead of operations. {Since the unitary transformation can be pushed to either SPAM procedures or coherent manipulations, it should be
	seen as a mismatch between them. Indeed, the physical unitary conjugation $\tilde{\mbb G} = \mc U \mbb G \mc U^\dagger$ doesn't affect the \emph{internal action} of operations, but exclusively the connection between operations and SPAM procedures.} Changing the targeted gate-set $\mbb G$ to $\mc U \mbb G \mc U^\dagger$ is allowed by the degree of freedom in labeling what is the basis for SPAM procedures
and what is the basis for processes. 

In \cref{sec:appendixA}, we {show how exactly} the disconnect between $p^m$ and
$f_{\rm tr}(\tilde{\mbb G},\mc U \mbb G \mc U^\dagger, m)$ depends on the choice of {targeted gate-set $\mc U \mbb G \mc U^\dagger$}. 
That is, we {provide an expression of the form}
\begin{align}\label{eq:Bloch_total}
		& f_{\rm tr} (\tilde{\mbb G} ,\mc U \mbb G \mc U^\dagger, m)
		= C(\mc U) p^m + D(m, \mc U)~,
\end{align}
where $\mc U$ is a physical unitary channel (see \cref{thm:total_change}).
{A first interesting observation is that} $D(m, \mc U)$ is typically negligible or becomes rapidly negligible as it is also
exponentially suppressed in $m$\footnote{Since $D(1, \mc U)$ is typically close to $0$, the exponential suppression is quite effective compared to $p^m \approx 1- m(1-p)$ which is essentially linear for small $m$.}. This means that the relative variation
in $f_{\rm tr}$ as the circuit grows in length,
\begin{align}\label{eq:marginal_law}
\frac{ f_{\rm tr}(\tilde{\mbb G}, \mc U \mbb G \mc U^\dagger, m+1)}
{ f_{\rm tr}(\tilde{\mbb G}, \mc U \mbb G \mc U^\dagger, m)}
&= p+\delta(m, \mc U)~,
\end{align}
depends weakly on the choice of {targeted gate-set}.
More precisely, $\delta(m, \mc U)$ is composed of two factors:
the first one decays exponentially in $m$ and is at most of order $(1-p)^{m/2}$,
while the second carries the dependence in $\mc U$; the existence of
a specific choice of $\mc U$ such that this last factor
becomes at most of order $(1-p)^{3/2}$ is
proven in the single-qubit case (\cref{sec:appendixB}),
and conjectured to hold in general. The explicit behaviour of $\delta (m,\mc U)$
given a numerically simulated gate-dependent noise model
is illustrated in \cref{fig:delta}.

Consequently, the gate-set circuit fidelity can be updated with a good approximation through the recursion relation
\begin{equation}\label{eq:update_fid}
\mc F (\tilde{\mbb G} ,\mc U\mbb G \mc U\ct, m+1) \approx
\frac{1}{d}+p \left(\mc F(\tilde{\mbb G} ,\mc U\mbb G \mc U\ct, m)-\frac{1}{d}\right).
\end{equation}
{Roughly speaking, this means that the choice of basis $\mc U$ in which are expressed the targets in $\mc U \mbb G \mc U^\dagger$ is not highly significant when it comes to updating the gate-set circuit fidelity as the circuit grows in depth.} {The RB decay rate $p$ enables the decrease in fidelity due to adding a gate to a circuit to be predicted.}

{However,} to provide insight on the total value of the gate-set circuit fidelity
given a circuit's length $m$, we need a stronger relation between
the RB estimate of $p$ and the gate-set circuit fidelity.
Fortunately,
the {basis} freedom in the choice of {targeted} gate-set can be fixed
in a way that allows us to estimate the total change in gate-set
circuit fidelity
for arbitrary circuit's lengths.

In \cref{sec:appendixB}, we prove that the {potentially large} disconnect between $p$ and
$f_{\rm tr}(\tilde{\mbb G},\mc U \mbb G \mc U^\dagger, 1)$ under general gate-dependent noise is almost completely accounted for by a basis mismatch {which, as we argued earlier, doesn't exactly correspond to
	a process error since unitary conjugation does not affect the internal dynamics of operations.} 
\begin{prop}
	\label{prop:main}
	For any single-qubit noisy gate-set $\tilde{\mbb G}$ perturbed from $\mbb G$, there exists an ideal {targeted} gate-set $\mc U \mbb G  \mc U^\dagger$, where $\mc U$ is a physical unitary, such that
	\begin{align}\label{eq:main}
\mc F (\tilde{\mbb G} ,\mc U \mbb G \mc U^\dagger , m) = \frac{1}{d}+\frac{d-1}{d} p^m + O\left((1-p)^2\right)~.
	\end{align}
\end{prop}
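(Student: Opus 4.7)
The plan is to build on the decomposition $f_{\rm tr}(\tilde{\mbb G}, \mc U \mbb G \mc U^\dagger, m) = C(\mc U) p^m + D(m, \mc U)$ from \cref{eq:Bloch_total} and to exhibit a physical unitary $\mc U$ that simultaneously forces $C(\mc U) = 1 + O((1-p)^2)$ and $D(m, \mc U) = O((1-p)^2)$ uniformly in $m$. Combining these with \cref{eq:gate-set_fid_gate_dep} then yields \cref{eq:main}.

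First I would work in the Pauli--Liouville picture restricted to the traceless subspace, a three-dimensional real inner-product space on which the noiseless single-qubit gates act via $SO(3)$. Writing $\tilde{\mc G}_i = \mc G_i + \Delta_i$ with $\|\Delta_i\| = O(\sqrt{1-p})$, I would invoke the gauge-reduction perspective of \cite{Wallman2017,Merkel2018} to find linear maps $\mc L, \mc R$, close to the identity, such that $\mc L^{-1} \tilde{\mc G}_i \mc R^{-1}$ differs from $\mc G_i$ only by a residual whose $\mbb G$-twirl on the traceless subspace is exactly $p\, \mbb I$. This cleanly separates the \emph{basis-mismatch} component, carried by $\mc L$ and $\mc R$, from the truly \emph{fidelity-reducing} component, captured by $p$.

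Next I would extract the unitary part of the basis mismatch. The decisive single-qubit-specific step is to polar-decompose a symmetric combination of $\mc L$ and $\mc R$ (for instance $\mc L^{1/2} \mc R^{-1/2}$) and to define $\mc U$ as the closest $SO(3)$ element, which in three real dimensions is explicit. With this choice, the residual deviation $\mc U^\dagger \tilde{\mc G}_i \mc U - \mc G_i$, now expressed in the gauge of the targeted gate-set $\mc U \mbb G \mc U^\dagger$, is arranged to vanish to first order in $\sqrt{1-p}$. A direct expansion then shows that both (i) the prefactor $C(\mc U) - 1$, which is quadratic in the gap between $\mc L,\mc R$ and their best unitary approximation, and (ii) the transient term $D(m, \mc U)$, whose magnitude is controlled by the non-depolarizing part of the twirled residual and is moreover exponentially suppressed in $m$, are of order $(1-p)^2$. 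The uniformity in $m$ is essential and comes from the exponential suppression of $D(m, \mc U)$ being at least as fast as $p^m$.

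The main obstacle is showing that a \emph{single} physical unitary $\mc U$ can absorb both the left and right gauge freedoms $\mc L$ and $\mc R$ to second order. Because $\tilde{\mc G}_i$ is completely positive but not in general unitarily similar to $\mc G_i$, some non-unitary residual must remain; the content of the proposition is that this residual is quadratically small in $\sqrt{1-p}$. This is precisely where the three-dimensional $SO(3)$ geometry is exploited, since the polar-decomposition argument does not carry over verbatim to the Liouville representation of higher-dimensional unitaries, which is why the general case is only conjectured.
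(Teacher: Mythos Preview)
Your overall strategy---split $f_{\rm tr}$ into $C(\mc U)\,p^m + D(m,\mc U)$ and pick $\mc U$ by a polar decomposition on the three-dimensional traceless block---is the same route the paper takes. Two concrete gaps, however, separate your sketch from a complete argument.

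First, the assertion that the per-gate residual $\mc U^\dagger \tilde{\mc G}_i\, \mc U - \mc G_i$ can be ``arranged to vanish to first order in $\sqrt{1-p}$'' is too strong. Under genuinely gate-dependent noise no single unitary removes the leading-order error from every gate; it removes only an averaged coherent component. The paper does not attempt this. It polar-decomposes only the fourth-order \emph{right} error, $\bc E_R^{(4)}\bm\Pi_{\rm tr}=\bc D\,\bc V\,\bm\Pi_{\rm tr}$, and sets $\mc U=\mc V^\dagger$, so that $\bc E_R^{(4)}\mc U$ is \emph{incoherent} in the sense of \cref{eq:incoherence}. The workhorse is then the approximate multiplicativity of incoherent channels, \cref{eq:incoherent_prop}, which makes the numerator of $C(\mc V^\dagger)$ factor and yields $C(\mc V^\dagger)=1+O(r^2)$ directly. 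The bound on $D$ is deliberately asymmetric: incoherence gives $\sqrt{1-a^2(\mc V^\dagger)}=O(r)$ while only $\sqrt{1-b^2(\mc V^\dagger)}=O(\sqrt r)$, and together with $\|\bm\Delta\|=O(\sqrt r)$ the product lands at $O(r^2)$. Your symmetric object $\mc L^{1/2}\mc R^{-1/2}$ is not what is decomposed, and you never invoke the incoherence property that does the actual work.

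Second, the starting hypothesis $\|\Delta_i\|=O(\sqrt{1-p})$ is circular. A priori the perturbation is controlled by $r:=r(\mc I+\mbb E\,\delta_{\mc G}^{(I)})$, the infidelity relative to the \emph{original} target $\mbb G$, and $r$ can exceed $1-p$ by orders of magnitude---this is precisely the basis-mismatch regime the proposition is meant to cover (cf.\ the relabeling example in \cref{eq:modelA,eq:modelB,eq:modelC}). The paper's argument first reaches \cref{eq:almost_done} with error $O(r^2)$, and only then \emph{iterates}: re-expand $\tilde{\mbb G}$ about the corrected target $\mc V^\dagger\mbb G\mc V$ so that the new perturbations obey $r(\mc I+\mbb E\,\delta^{(V^\dagger)})=\tfrac{d-1}{d}(1-p)+O(r^2)$, and rerun the analysis to obtain \cref{eq:done}. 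Without this bootstrap you cannot legitimately replace $r$ by $1-p$ in the final bound.
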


In fact, we conjecture this result to hold for any dimension,
or at least for most realistic gate-dependent noise models.
To grasp the physical reasoning behind this,
we refer to the end of \cref{sec:appendixB},
as it rests on some prior technical analysis. The extension of
\cref{prop:main} to $2$-qubit systems is
supported by numerical evidences (see \cref{fig: pbloch2,fig:delta2,fig:basis_adjustment2}).

{The unitary freedom appearing in the gate-set circuit fidelity 
	means that there exists an infinite amount of fidelity-based figures
	of merit describing noisy circuits, one for each infinitely many targeted gate-set $\mc U \mbb G \mc U^\dagger$. Of course, there exist choices of targeted operations that yield in gate-set circuit fidelities
	that differ from \cref{{eq:main}} (see \cite{Proctor2017,Qi2018}); the example
	shown in \cref{eq:modelA,eq:modelB,eq:modelC} is an elementary instance thereof.
	\Cref{prop:main} simply states that there exists a natural choice of gate-set $\mc U \mbb G \mc U^\dagger$ that allows connecting the outcome of an RB experiment to a gate-set circuit fidelity. The choice of
	basis $\mc U$ is like taking the perspective of the gates rather than the perspective of SPAM procedures (as is implicitly done when defining gates relative to the energy eigenbasis of the system). In this picture, the gate-set circuit fidelity describes the accuracy of the internal behaviour of operations as they act in concert.
	}

To reformulate the result, the family of circuits
$\tilde{\mc G}_{m:1}$ built from
a composition of $m$ noisy operations
$\tilde{\mc G} \in \tilde{\mbb G}$ mimics
the family of ideal circuits $\mc U \mc G_{m:1} \mc U^\dagger$
with fidelity $\frac{1}{d}+\frac{d-1}{d}p^m$. In the paradigm where
the {initially targeted} operations $\mc G \in \mbb G$ are defined with respect to SPAM
procedures,
$\mc U$ captures the misalignment between
the basis in which the operations $\tilde{\mc G} \in \tilde{\mbb G}$ are defined and the
basis defined by SPAM procedures. This goes farther: consider an
additional gate-set, for which the {targeted} operations $\mc H \in \mbb H$
are also are defined respect to SPAM
procedures. From \cref{prop:main}, there exists a physical unitary
$\mc V$ for which $\tilde{\mc H}_{m:1}$ imitates the action of $\mc V \mc H_{m:1} \mc V^\dagger$ with fidelity
$\frac{1}{d}+\frac{d-1}{d}q^m$ (where $q$ is estimated through RB). $\mc U^\dagger \mc V$ captures the basis mismatch between the gate-sets $\tilde{\mbb G}$ and $\tilde{\mbb H}$. Such a non-trivial mismatch could easily be imagined if, for instance, gates
belonging to $\tilde{\mbb H}$ were obtained through a different physical process than $\tilde{\mbb G}$,
or calibrated with regards to alternate points of reference.

{\section{Finding the appropriate set of targeted gates for specific noise models}}\label{sec:operational}

We now discuss how the {appropriate unitary conjugation on the initial targeted gate-set} can be calculated for specific noise
models, whether from numerical simulations, analytic approximations, or
tomographic reconstructions.
As shown in \cref{thm:total_change} and \cref{eq:Bloch_total},
the total change of gate-set circuit fidelity depends on the physical basis in
which the ideal gate-set is expressed. In the single-qubit case, we showed the
existence of a physical basis $\mc U$ that reconciles
$f_{\rm tr} (\tilde{\mbb G} ,\mc U \mbb G \mc U^\dagger, m)$
with $p^m$ through \cref{prop:main}. One might suspect that the
unitary $\mc U$ can be found through the maximization of the gate-set fidelity:
\begin{align}\label{eq:wrong_hypothesis}
	\mc U= \argmax\limits_{\mc V} \mc F (\tilde{\mbb G} ,\mc V \mbb G \mc V^\dagger, 1)~,
\end{align}
and indeed this would handle noise models of the form
${\tilde{\mbb G}= \mc U \mc E  \mbb G \mc U^\dagger}$, as
\begin{align*}
p &= f_{\rm tr}(\tilde{\mbb G}, \mc U \mbb G \mc U^\dagger, 1) \geq f_{\rm tr}(\tilde{\mbb G}, \mbb G, 1)~.
\end{align*}
However, this hypothesis fails for simple noise models of the form
{$\tilde{\mbb G}= \mc U^\dagger \mc E \mbb G \mc U^\dagger$, where
\begin{align*}
p &= f_{\rm tr}(\tilde{\mbb G}, \mc U \mbb G \mc U^\dagger, 1)  \leq f_{\rm tr}(\tilde{\mbb G}  , \mbb G, 1)~.
\end{align*}
Those last two examples show that $p$ can be greater or less than $f_{\rm tr}(\tilde{\mbb G}  , \mbb G, 1)$, depending on the noise model. More examples are derived in \cite{Proctor2017,Qi2018}.}
This particular case study is informative as these two last noise models
share something in common:
there exists a choice of unitary {$\mc U$} that cancels the noisy map
on the right of the noisy gate-set.
Although such exact cancellation is not always possible, we now show that a close approximation is sufficient. Consider the slightly more general noise model of the form
$\tilde{\mbb G} = \mc E_L \mbb G \mc E_R$, where we allow fixed
but arbitrary error maps to the left and the right of an ideal gate-set. {It can be shown while staying under the scope of the original analysis provided in \cite{Magesan2011, Magesan2012a}} that $p^m= f_{\rm tr}(\mc E_R \mc E_L \mbb G, \mbb G, m)$, since $\mc E_R \mc E_L$ is the effective
error map between two otherwise perfect implementations of the
gate-set elements. In the single-qubit case
(and for many, if not all physically motivated higher dimensional noise models)
there exists a unitary operation $\mc U$ such that
\begin{equation}\label{eq:match_fid}
	{\rm F}(\mc E_R \mc E_L, ~ \mc I) = \mc F(\mc E_L \mbb G \mc E_R, ~ \mc U \mbb G \mc U^\dagger, 1) + O((1-p)^2)~,
\end{equation}
(see \cref{sec:appendixB}). That is, the fidelity of the map between two noisy gate-sets can be seen as the gate-set circuit fidelity between a noisy gate-set and an {appropriately targeted ideal} one. A choice of such physical
unitary is
\begin{align}
\mc U = \argmax\limits_{\mc V} F\left(\mc E_R \mc V, \mc I\right)~,
\end{align}
which essentially cancels the unitary part of $\mc E_R$ \footnote{Of course,
$\argmax\limits_{\mc V} F\left( \mc V^{\dagger} \mc E_L , \mc I\right)$
would also fulfill \cref{eq:match_fid}.}. {Another way to see this is that the unitary freedom allows us to reexpress the errors $\mc E_L,~\mc E_R$ as
\begin{align*}
	\mc E_L &\rightarrow \mc U^\dagger \mc E_L  \\
	\mc E_R &\rightarrow \mc E_R \mc U~.
\end{align*}
We can then chose the unitary that depletes $\mc E_R \mc U$ from any coherent component.
Intuitively, reexpressing the error on one side to make it incoherent prevents any type of unitary conjugation
of the form $\tilde{\mbb G} = \mc U \mc E \mbb G \mc U^\dagger$.}

For more general gate-dependent noise models, the idea remains more or less the same.
As shown in \cref{sec:appendixB}, the right error $\mc E_R$ is replaced by
its generalization, the $4^{\rm th}$ order right error $\mc E_R^{(4)}=\mbb E\left[{\mc G_{4:1}}^\dagger \tilde{\mc G}_{4:1} \right]$ (\cref{eq:right_error}).
From there, we find:
\begin{prop}[{Finding the appropriate targeted gate-set}]\label{prop:optimization}
	A proper choice of physical basis $\mc U$
	for which \cref{eq:main} applies is
\begin{align}\label{eq:optimization}
\mc U = \argmax\limits_{\mc V} F\left( \mbb E\left[{\mc G_{4:1}}^\dagger \tilde{\mc G}_{4:1} \right] \mc V, \mc I\right)~,
\end{align}
	$\mc U$ cancels the unitary part of the $4^{\rm th}$ order right error.
\end{prop}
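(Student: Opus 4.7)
My plan is to reduce the proposition to two well-understood ingredients: (i) the identification of the appropriate targeted gate-set in the solvable fixed-error case $\tilde{\mbb G} = \mc E_L \mbb G \mc E_R$ via \cref{eq:match_fid}, where the correct $\mc U$ cancels the unitary part of $\mc E_R$; and (ii) the generalization established in \cref{sec:appendixB} that, for arbitrary gate-dependent Markovian noise, the fixed right error $\mc E_R$ is replaced in the perturbative analysis by the $4^{\rm th}$ order right error $\mc E_R^{(4)} := \mbb E[\mc G_{4:1}^\dagger \tilde{\mc G}_{4:1}]$. The proposition is then reduced to the assertion that $\mc U = \argmax_{\mc V} F(\mc E_R^{(4)} \mc V, \mc I)$ both cancels the unitary part of $\mc E_R^{(4)}$ and makes \cref{eq:main} hold.

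First I would verify the equivalence between the argmax characterization and cancellation of the unitary part. Writing a polar-type decomposition $\mc E_R^{(4)} \propto \mc U_0 \mc P$, where $\mc U_0$ is a unitary channel and $\mc P$ is the positive (incoherent) factor, the average gate fidelity $F(\mc E_R^{(4)} \mc V, \mc I)$ is an affine function of a Hilbert--Schmidt inner product whose maximum over unitary $\mc V$ is attained at $\mc V = \mc U_0^{\dagger}$. This is a standard argument -- extracting the unitary part of a channel maximizes its fidelity with the identity -- and it gives the second sentence of the proposition directly.

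Next I would assemble the conclusion. Applying \cref{eq:match_fid} with $\mc E_R^{(4)}$ in place of $\mc E_R$ and the above $\mc U$ yields $\mc F(\tilde{\mbb G}, \mc U \mbb G \mc U^\dagger, 1) = {\rm F}(\mc E_R^{(4)} \mc E_L, \mc I) + O((1-p)^2)$; and since $\mc E_R^{(4)} \mc U$ has been stripped of its coherent component, the right-hand side aligns with the RB decay rate $p$ to the same order. The extension to arbitrary circuit lengths $m$ then follows by iterating \cref{eq:update_fid} and using the exponential suppression of $\delta(m, \mc U)$ established around \cref{eq:marginal_law}, exactly as in the proof of \cref{prop:main}.

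The main obstacle, in my view, is not in the algebra above but in justifying that the replacement $\mc E_R \to \mc E_R^{(4)}$ is tight at $O((1-p)^2)$ when substituted into \cref{eq:match_fid} -- i.e., that the gate-dependent fluctuations absorbed into the fourth-order averaging do not reintroduce coherent corrections at first order in $1-p$. The fourth order, rather than any lower one, enters because only at this depth do the twirl contractions underlying RB sufficiently suppress gate-dependent cross-terms. This is the same technical step that underlies \cref{prop:main}, and it is precisely its extension to $d > 2$ that the authors leave as a conjecture.
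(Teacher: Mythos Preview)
Your outline is broadly aligned with the paper's strategy, but it leans on \cref{eq:match_fid} as an input, and that is circular: \cref{eq:match_fid} is itself a corollary of the analysis in \cref{sec:appendixB}, not an independent lemma you can invoke with the substitution $\mc E_R \to \mc E_R^{(4)}$. The paper does \emph{not} reduce the gate-dependent case to the fixed-error case and then plug in. Instead it works directly with the spectral decomposition of the twirling superchannel $\bc T$ (\cref{thm:total_change}): the dominant left/right eigenvectors $\bc A_\infty,\bc B_\infty$ are approximated by $\bm\Pi_{\rm tr}\bc E_R^{(4)}$ and $\bc E_L^{(4)}\bm\Pi_{\rm tr}$ to $O(r^2)$ via a Bauer--Fike bound showing the subdominant eigenvalues of $\bc T$ are $O(\sqrt{r})$ --- this is precisely why the \emph{fourth} power suffices, a point you gestured at but did not pin down. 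Then the single-qubit channel structure of \cite{Ruskai2002} gives the decomposition $\bc E_R^{(4)}\bm\Pi_{\rm tr}=\bc D\bc V\bm\Pi_{\rm tr}$ with $\mc D$ incoherent (note: $\mc D$ on the \emph{left}, so that right-multiplication by $\mc V^\dagger$ cancels the unitary part), and the incoherence multiplicativity property \cref{eq:incoherent_prop} is used to factor the inner product defining $C(\mc V^\dagger)$, yielding $C(\mc V^\dagger)=1+O(r^2)$ and $D(1,\mc V^\dagger)=O(r^2)$ directly.

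A second gap is your extension to arbitrary $m$ by iterating \cref{eq:update_fid}. That recursion carries a per-step error $\delta(m,\mc U)$, and naively iterating would accumulate it; the paper instead obtains the result for all $m$ in one shot from the exact expression in \cref{thm:total_change}, since $C(\mc U)p^m$ already carries the full $m$-dependence and $D(m,\mc U)$ is bounded uniformly. Finally, a minor point: your polar form $\mc E_R^{(4)}\propto\mc U_0\mc P$ has the factors in the wrong order for right-cancellation --- with that ordering $\mc E_R^{(4)}\mc U_0^\dagger=\mc U_0\mc P\mc U_0^\dagger$, which is still incoherent, so the conclusion survives, but the cleaner statement matching the paper is $\mc E_R^{(4)}=\mc D\mc V$ with $\mc U=\mc V^\dagger$.
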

This provides
 a means to guide the search
of the appropriate ideal {targeted} gate-set of comparison
$\mc U \mbb G \mc U^\dagger$ given
a numerical noise model $\tilde{\mbb G}$. Indeed,
the $4^{\rm th}$ order right error is easily found, either
by direct computation of the average $\mbb E\left[{\mc G_{4:1}}^\dagger \tilde{\mc G}_{4:1}\right]$,
or more efficiently by solving the eigensystem defined in \cref{eq:left_eigen}.
The optimization defined in \cref{eq:optimization} can be solved
via a gradient ascent parametrized over the $d^2-1$
degrees of freedom of $SU(d)$.

In the single-qubit case, the optimization procedure can be replaced by
an analytical search. Given the process matrix $\bc E_R^{(4)}$ of the $4^{\rm th}$ order right error, it suffices
to find the polar decomposition of its $3 \times 3$ submatrix acting on the
Bloch vectors: $\bc E_R^{(4)}\bm \Pi_{\rm tr}= \bc D_{\rm tr} \bc V_{\rm tr}$.
The unitary factor $\mc V$ corresponds to $\mc U^\dagger$, while the positive factor $\mc D$ captures an incoherent process (rigorously defined in \cref{eq:incoherence}).

With this at hand,
we performed numerically simulated RB experiments
under gate-dependent noise models. Each of the 24 Cliffords
was constructed by a sequence of $X$ and $Y$ pulses,
$G_x=P(\sigma_x,\pi/2)$ and $G_y=P(\sigma_y,\pi/2)$, where
\begin{align}\label{eq:pulse}
	P(H,\theta):= e^{i \theta H/2}~.
\end{align}
 The 2-qubit Cliffords were obtained through the construction
 shown in \cite{Barends2014,Corcoles2013},
 where the 11520 gates are composed of single-qubit Clifford and CZ gates.
The implementation
of the 2-qubit entangling operation was
consistently performed with an over-rotation:
$\tilde{\mc G}_{CZ}=\mc P(\sigma^1_z\sigma^2_z-\sigma_z^1-\sigma^2_z ,\pi/2 + 10^{-1})$. In \cref{fig:pbloch}, the
single-qubit gate generators are modeled
with a slight over-rotation: $\tilde{\mc G}_x=\mc P(\sigma_x,\pi/2+10^{-1})$ and $\tilde{\mc G}_y=\mc P(\sigma_y,\pi/2+10^{-1})$. This model exemplifies
the failure of the maximization
hypothesis proposed in \cref{eq:wrong_hypothesis}.
In \cref{fig:delta,fig:basis_adjustment},
the
single-qubit gate generators are followed
by a short $Z$ pulse, $\tilde{\mc G}_x=\mc P(\sigma_z,\theta_z)\mc G_x$ and $\tilde{\mc G}_y=\mc P(\sigma_z,\theta_z) \mc G_y$,
which reproduces the toy
model used in \cite{Proctor2017}.

\section{Conclusion}

RB experiments estimate the survival probability
decay parameter $p$ of motion-reversal circuits
constituted of operations from a noisy gate-set $\tilde{\mbb G}$ of increasing length (see \cref{eq:intrinsic_RB}).
While motion-reversal is
intrinsic to the experimental RB procedure, the
estimated decay constant $p$ can be interpreted beyond this paradigm.
In this paper we have shown that, in a physically relevant limit, the very same parameter determines an interesting figure of merit, namely the gate-set circuit fidelity (defined in \cref{eq:avg_circuit_fid_construct}):
as a random operation from $\tilde{\mbb G}$ is introduced to a random circuit
constructed from elements in $\tilde{\mbb G}$,
$p$ captures the expected relative change in the gate-set circuit fideilty through \cref{eq:update_fid}.

It is also possible to characterize the full evolution of gate-set circuit fidelity as a function of the
circuit length. In this paper, we have also demonstrated that given a
single-qubit noisy gate-set $\tilde{\mbb G}$ perturbed from $\mbb G$, there
exists {an alternate set of targeted gates obtained through a} physical basis change $\mc U \mbb G \mc U^\dagger$
such that the gate-set circuit fidelity takes the simple form given in \cref{eq:main}. This gives a rigorous underpinning to previous work that has assumed that the experimental RB decay parameter robustly determines a relevant average gate fidelity (\cref{eq:average_circuit_fidelity}) for experimental control under generic gate-dependent scenarios. We conjecture a similar result to hold for higher dimensions and provide numerical evidence and physically motivated arguments to support this conjecture.
Given any specific numerical noise model $\tilde{\mbb G}$ perturbed from $\mbb G$, we showed how to obtain a physical unitary $\mc U$ for which \cref{eq:main} holds. The procedure
can be seen as a fidelity maximation of the $4^{\rm th}$ order right error acting on the gate-set through a unitary correction (see \cref{prop:optimization}).

The introduction of such a physical basis adjustment is natural because it has no effect on how errors accumulate as a function of the sequence length. Rather, it only reflects a basis mismatch to the experimental SPAM procedures. This is in principle detectable by RB experiments but in practice not part of the goals of such diagnostic experiments. In particular, differences in the (independent) basis adjustments  required for distinct  gate-sets will not appear in any characterization of the individual gate-sets, but will be detected when comparing RB experiments for this distinct gate-sets (e.g., comparing dihedral benchmarking and standard randomized benchmarking experiments which have distinct gate-sets but share gates in common, or comparing independent single-qubit RB on two qubits - which has no two-qubit entangling gate - with standard two-qubit RB). We leave the problem of characterizing relative basis mismatch between independent gate-sets as a subject for further work.

\appendix

\section{An expression for the total change in the gate-set circuit fidelity}\label{sec:appendixA}

In this section, we extend the standard RB analysis under gate-dependent noise
provided in \cite{Wallman2017,Merkel2018} in order to prove the claim from
\cref{eq:marginal_law} that standard RB returns the relative variation of the
gate-set circuit fidelity.

Let $\bc A$ be the Liouville matrix of a linear map $\mc A$ and $\Pi_{\rm tr}(\rho) = \rho -\mbb I \tr \rho /d$ be the projector onto the traceless component. {We denote the Frobenius norm, which is defined by the Hilbert-Schmidt inner product, as $\|\cdot\|_F$. For instance, in the qubit case $\| \bm \Pi_{\rm tr} \|_F^2 =3$. We denote the induced $2$-norm as $\|\cdot\|_2$, which corresponds to the maximal singular value.}
Let $e_j$ be the canonical unit vectors, ${A = \sum_{j,k} a_{j,k} e_j e_k^T}$, and
\begin{align}
{\rm vec}(A) = \sum_{j,k} a_{j,k} e_k\otimes e_j~.
\end{align}
Using the identity
\begin{align}\label{eq:vecIdentity}
{\rm vec}(ABC) =(C^T \otimes A) {\rm vec}(B)~,
\end{align}
we have
\begin{align}
	f_{\rm tr} (\tilde{\mbb G} ,\mbb G, m) &= \mbb E \left( {\frac{\inner{ \tilde{\bc G}_{m:1}\bm \Pi_{\rm tr}}{ {\bc G}_{m:1} \bm \Pi_{\rm tr}}}{\|\bm \Pi_{\rm tr}\|_F^2}} \right)~ \notag \\
	&=  \frac{{\rm{vec}}^\dagger(\bm \Pi_{\rm tr})}{{\|\bm \Pi_{\rm tr}\|_F}} \bc T^m \frac{{\rm{vec}}(\bm \Pi_{\rm tr})}{{\|\bm \Pi_{\rm tr}\|_F}}\label{eq:reexpressed_sub_fid}
\end{align}
where the twirling superchannel \cite{Chasseur2015,Proctor2017,Wallman2017} is
\begin{align}
\bc T = \mbb E [\bc G_{\rm tr} \otimes \tilde{\bc G} ]
\end{align}
and $\bc G_{\rm tr} = \bc G \bm \Pi_{\rm tr}$.
Changing  the gate-set $\mbb G$ to $\mc U \mbb G \mc U^\dagger$ for some physical unitary $\mc U$ leaves $\bm \Pi_{\rm tr} =\bc U \bm \Pi_{\rm tr} \bc U^\dagger$. Therefore
\begin{align}
	 f_{\rm tr} (\tilde{\mbb G} ,\mc U \mbb G \mc U^\dagger, m)=
	\frac{{\rm{vec}}^\dagger( \bc U \bm \Pi_{\rm tr})}{{\|\bm \Pi_{\rm tr}\|_F}} \bc T^m \frac{{\rm{vec}}(\bc U \bm \Pi_{\rm tr})}{{\|\bm \Pi_{\rm tr}\|_F}}~.\label{eq:reexpressed_sub_fid_U}
\end{align}
The spectrum of $\bc T$ is unchanged under the basis change
$\bc G \rightarrow \bc U \bc G \bc U^\dagger$. Moreover, its most important eigenvectors are as follows:

\begin{lem}\label{lem:channel_Li}
Let $p$ be the highest eigenvalue of $\bc T$ and
\begin{subequations}
\begin{align}
	\bc A_m &:= p^{-m} \mbb E \left[({\bc G}_{{\rm tr}, m:1})^\dagger
	\bm \Pi_{\rm tr}\tilde{\bc G}_{m:1} \right]~, \label{eq:Ai} \\
	\bc B_m &:= p^{-m} \mbb E \left[\tilde{\bc G}_{m:1} \bm \Pi_{\rm tr}({\bc G}_{{\rm tr},m:1})^\dagger \right]~. \label{eq:Bi}
\end{align}
\end{subequations}
Then we have
\begin{subequations}
\begin{align}
	{\rm{vec}}^\dagger(\bc A_\infty^T) \bc T & = p~ {\rm{vec}}^\dagger(\bc A_\infty^T)~,\label{eq:left_eigen}\\
	\bc T {\rm{vec}}(\bc B_\infty) &= p ~{\rm{vec}}(\bc B_\infty)\label{eq:right_eigen}~.
\end{align}
\end{subequations}
\end{lem}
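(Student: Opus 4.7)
My plan is to turn each eigenvalue equation into a one-step recursion by combining \cref{eq:vecIdentity} with the i.i.d.\ structure of the gate distribution, and then close the loop by sending $m\to\infty$.

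For the right eigenvector, since the $\{\bc G_i\}$ are i.i.d.\ we have $\bc T^m=\mbb E[\bc G_{{\rm tr},m:1}\otimes\tilde{\bc G}_{m:1}]$. Applying \cref{eq:vecIdentity} to $\tilde{\bc G}_{m:1}\,\bm \Pi_{\rm tr}\,\bc G_{{\rm tr},m:1}^\dagger$, and using that in the Pauli basis Liouville matrices are real so that $\bc G_{\rm tr}^T=\bc G_{\rm tr}^\dagger$, I would derive $\bc T^m\,{\rm vec}(\bm \Pi_{\rm tr})=p^m\,{\rm vec}(\bc B_m)$. Left-multiplying by $\bc T$ once more then gives $\bc T\,{\rm vec}(\bc B_m)=p\,{\rm vec}(\bc B_{m+1})$, and \cref{eq:right_eigen} drops out at $m=\infty$.

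For the left eigenvector, I would first derive the dual rule ${\rm vec}^\dagger(Y)\,\bc T={\rm vec}^\dagger(\mbb E[\tilde{\bc G}^\dagger Y\bc G_{\rm tr}])$ by a cyclic-trace manipulation inside $\langle Y,\bc T X\rangle$. Plugging in $Y=\bc A_m^T$ and letting a fresh independent gate $\bc G$ be absorbed into the existing sequence via the identities $\bc G^\dagger\,\bc G_{{\rm tr},m:1}^\dagger=(\bc G_{{\rm tr},m:1}\bc G)^\dagger$ on the ideal side and $\tilde{\bc G}_{m:1}\tilde{\bc G}$ on the noisy side, then relabeling the $m+1$ exchangeable gates, yields ${\rm vec}^\dagger(\bc A_m^T)\,\bc T=p\,{\rm vec}^\dagger(\bc A_{m+1}^T)$; taking $m\to\infty$ gives \cref{eq:left_eigen}.

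The delicate step is the $m\to\infty$ limit itself. Since ${\rm vec}(\bc B_m)=p^{-m}\bc T^m\,{\rm vec}(\bm \Pi_{\rm tr})$ is a normalized power iteration, its convergence rests on $p$ being the eigenvalue of $\bc T$ of strictly largest modulus on the cyclic subspace generated by ${\rm vec}(\bm \Pi_{\rm tr})$, with semisimple Jordan structure on that eigenspace, and on ${\rm vec}(\bm \Pi_{\rm tr})$ having nonzero overlap with the $p$-eigenspace (otherwise the RB amplitude $A$ in \cref{eq:intrinsic_RB} would vanish). In the high-fidelity regime assumed throughout the paper, $\bc T$ is a small perturbation of its perfect-gate limit, and standard spectral perturbation arguments ensure these conditions persist; the main technical nuisance I anticipate is ruling out accidental Jordan blocks at $p$, which near identity does not occur.
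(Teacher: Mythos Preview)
Your proposal is correct and follows essentially the same route as the paper: you vectorize $\bc B_m$ via \cref{eq:vecIdentity}, use realness of the Liouville representation and independence of the gates to obtain ${\rm vec}(\bc B_m)=(\bc T/p)^m{\rm vec}(\bm\Pi_{\rm tr})$, and then invoke the perturbative spectral structure of $\bc T$ (perturbed from $\{1,0,0,\ldots\}$) to conclude that power iteration converges to the dominant eigenvector; the paper handles $\bc A_\infty^T$ with a one-line ``same argument applies,'' whereas you spell out the dual recursion explicitly, and your remarks on Jordan structure and overlap are more cautious than the paper's, but the substance is identical.
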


\begin{proof}
By \cref{eq:vecIdentity},
\begin{align}
	{\rm vec}(\bc B_m)
	&=p^{-m}\mbb E((\bc G_{{\rm tr}, m:1})^* \otimes \tilde{\bc G}_{m:1} ) {\rm vec}(\bm \Pi_{\rm tr})~.
\end{align}
As the Liouville representation is real-valued and the $\bc G_j$ are independent,
\begin{align}
	{\rm vec}(\bc B_m)
	&=\left(\bc T/p\right)^m {\rm vec}(\bm \Pi_{\rm tr})~.
\end{align}
Since the noisy gate-set $\tilde{\mbb G}$ is a small perturbation from $\mbb G$ the spectrum of $\bc T$ will be slightly perturbed from $\{1, 0, 0, \ldots\}$. Therefore $(\bc T/p)^m$ approaches a rank $1$ projector as $m$ increases and so
${\rm vec}(\bc B_\infty)$ is a $+1$-eigenvector of $\bc T/p$.

The same argument applies to $\bc A_\infty^T$.
\end{proof}

\Cref{lem:channel_Li} allows us to write
\begin{align} \label{eq:eigen_decomposition}
	\bc T = p~\frac{{\rm vec}(\bc B_\infty ) {\rm vec}^\dagger( \bc A_\infty^T )}{{\inner{\bc A_\infty^T}{\bc B_\infty}}}+\bm \Delta~,
\end{align}
with $\bm \Delta {\rm vec}\left(\bc B_\infty \right)=  {\rm vec}^\dagger\left( \bc A_\infty^T \right) \bm \Delta =0$.
In \cref{eq:reexpressed_sub_fid_U}, we can expand the vectors as
\begin{subequations}
	\begin{align}
		\frac{{\rm{vec}}^\dagger(\bc U \bm \Pi_{\rm tr})}{{\|\bm \Pi_{\rm tr}\|_F}} &=a(\mc U) \frac{{\rm{vec}}^\dagger( \bc A_\infty ^T)}{\|\bc A_\infty \|_F} + \sqrt{1-a^2(\mc U)} w^\dagger(\mc U) \label{eq:left_vec_expansion} \\
		\frac{{\rm{vec}}(\bc U \bm \Pi_{\rm tr})}{{\|\bm \Pi_{\rm tr}\|_F}} &=b(\mc U) \frac{{\rm{vec}}(\bc B_\infty )}{{\|\bc B_\infty \|_F}} + \sqrt{1-b^2(\mc U)} v(\mc U) \label{eq:right_vec_expansion}
	\end{align}
\end{subequations}
where
\begin{align}
	a(\mc U)&:= {\frac{\inner{\bc A_\infty^T}{\bc U}}{\|\bm \Pi_{\rm tr}\|_F^2}} \left(\frac{\| \bc A_\infty\|_F^2}{{\|\bm \Pi_{\rm tr}\|_F^2}}\right)^{-1/2}~, \label{eq:aiU} \\
	b(\mc U)& := {\frac{\inner{\bc U}{ \bc B_\infty}}{\|\bm \Pi_{\rm tr}\|_F^2}} \left(\frac{\| \bc B_\infty\|_F^2}{{\|\bm \Pi_{\rm tr}\|_F^2}}\right)^{-1/2}~.
\end{align}
and $v(\mc U)$, $w(\mc U)$ are implicitly defined unit vectors.
Using this expansion together with
\cref{eq:eigen_decomposition} in \cref{eq:reexpressed_sub_fid_U}
yields the following result:
\begin{thm}[Total gate-set circuit fidelity] \label{thm:total_change}
	The gate-set circuit fidelity obeys
	\begin{align}\label{eq:total_change}
		\mc F (\tilde{\mbb G} ,\mc U \mbb G \mc U^\dagger, m)
		= \frac{1}{d}+\frac{d-1}{d}\left(C(\mc U) p^m + D(m, \mc U)\right)~,
	\end{align}
	where
	\begin{subequations}
		\begin{align}
			 C(\mc U) :=& {\frac{\inner{\bc A_\infty^T}{\bc U}}{\|\bm \Pi_{\rm tr}\|_F^2}
			\frac{\inner{\bc U}{\bc B_\infty}}{\|\bm \Pi_{\rm tr}\|_F^2}
			\left(\frac{\inner{\bc A_\infty^T}{ \bc B_\infty }}{\|\bm \Pi_{\rm tr}\|_F^2}\right)^{-1}} \notag\\
			 =&{\frac{\inner{\bm \Pi_{\rm tr}}{\bc A_\infty \bc U}}{\|\bm \Pi_{\rm tr}\|_F^2}
				\frac{\inner{\bm \Pi_{\rm tr}}{\bc U^\dagger \bc B_\infty}}{\|\bm \Pi_{\rm tr}\|_F^2}
			\frac{\|\bm \Pi_{\rm tr}\|_F^2}{\inner{\bm \Pi_{\rm tr }}{\bc A_\infty \bc B_\infty }}} \\
			 D(m, \mc U) :=& \sqrt{1-a^2(\mc U)}\sqrt{1-b^2 (\mc U)} w(\mc U)^\dagger \bm \Delta^m v(\mc U) ~. \label{eq:D_i}
		\end{align}
	\end{subequations}
\end{thm}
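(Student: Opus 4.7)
The plan is to substitute the spectral decomposition of $\bc T$ from \cref{eq:eigen_decomposition} into the quadratic form \cref{eq:reexpressed_sub_fid_U} and then identify the two resulting terms with $C(\mc U)p^m$ and $D(m,\mc U)$. First, I would write $\bc T = pP + \bm \Delta$ with $P := \mathrm{vec}(\bc B_\infty)\mathrm{vec}^\dagger(\bc A_\infty^T)/\inner{\bc A_\infty^T}{\bc B_\infty}$, and raise both sides to the $m$-th power. Because $P\bm \Delta = \bm \Delta P = 0$ by the defining orthogonality conditions on $\bm \Delta$, and $P$ is a rank-$1$ oblique projector (so $P^2 = P$), the binomial expansion collapses cleanly to $\bc T^m = p^m P + \bm \Delta^m$. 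Substituting this into \cref{eq:reexpressed_sub_fid_U} splits $f_{\rm tr}(\tilde{\mbb G},\mc U\mbb G\mc U^\dagger, m)$ into two additive contributions that I can handle separately.

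Second, I would evaluate the $p^m$ piece by direct computation: $\mathrm{vec}^\dagger(\bc U\bm \Pi_{\rm tr})\, P\, \mathrm{vec}(\bc U\bm \Pi_{\rm tr}) = \inner{\bc U\bm \Pi_{\rm tr}}{\bc B_\infty}\inner{\bc A_\infty^T}{\bc U\bm \Pi_{\rm tr}}/\inner{\bc A_\infty^T}{\bc B_\infty}$, then divide by $\|\bm \Pi_{\rm tr}\|_F^2$. To match the compact form of $C(\mc U)$ stated in the theorem, the key observation is that \cref{eq:Ai,eq:Bi} force $\bc A_\infty$ and $\bc B_\infty$ to inherit an idempotent $\bm \Pi_{\rm tr}$ factor on the appropriate side, since $\bc G_{\rm tr}^\dagger = \bm \Pi_{\rm tr}\bc G^\dagger$; hence $\bm \Pi_{\rm tr}\bc A_\infty = \bc A_\infty$ and $\bc B_\infty \bm \Pi_{\rm tr} = \bc B_\infty$. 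Combined with the reality of the Liouville representation, this reduces $\inner{\bc A_\infty^T}{\bc U\bm \Pi_{\rm tr}}$ to $\inner{\bc A_\infty^T}{\bc U} = \inner{\bm \Pi_{\rm tr}}{\bc A_\infty \bc U}$, with the analogous identities for the $\bc B_\infty$ factor, reproducing both equivalent forms of $C(\mc U)$.

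Third, I would treat the $\bm \Delta^m$ piece by expanding the left and right vectors using \cref{eq:left_vec_expansion,eq:right_vec_expansion}. Since $\mathrm{vec}^\dagger(\bc A_\infty^T)\bm \Delta = 0$ and $\bm \Delta\, \mathrm{vec}(\bc B_\infty) = 0$, the eigenvector pieces are annihilated on either side of $\bm \Delta^m$, so only the orthogonal components survive, yielding $\sqrt{1-a^2(\mc U)}\sqrt{1-b^2(\mc U)}\, w^\dagger(\mc U)\bm \Delta^m v(\mc U)$, which is $D(m,\mc U)$ as defined in \cref{eq:D_i}. Multiplying by $(d-1)/d$ and adding $1/d$ per \cref{eq:gate-set_fid_gate_dep} then yields \cref{eq:total_change}.

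I expect the only real obstacle to be the bookkeeping of the $\bm \Pi_{\rm tr}$ idempotents inherited by $\bc A_\infty$ and $\bc B_\infty$; this is what justifies dropping the $\bm \Pi_{\rm tr}$ inside the inner products with $\bc U$ and what reconciles the raw quadratic-form expression with the symmetric compact form of $C(\mc U)$ given in the theorem. Once that algebraic step is in place, everything else is an immediate consequence of the spectral structure $\bc T = pP + \bm \Delta$ and the annihilation conditions on $\bm \Delta$ established in \cref{lem:channel_Li}.
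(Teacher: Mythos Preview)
Your proposal is correct and follows essentially the same approach as the paper: the paper's proof is the one-line remark ``using this expansion together with \cref{eq:eigen_decomposition} in \cref{eq:reexpressed_sub_fid_U} yields the following result,'' and your write-up simply fills in the implied steps (the collapse $\bc T^m = p^m P + \bm\Delta^m$ via $P\bm\Delta=\bm\Delta P=0$, $P^2=P$, and the annihilation of the eigenvector pieces in the $\bm\Delta^m$ term). Your extra bookkeeping on the inherited $\bm\Pi_{\rm tr}$ idempotents is exactly what is needed to reconcile the raw quadratic-form output with the compact form of $C(\mc U)$ stated in the theorem, and is left implicit in the paper.
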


In \cite{Proctor2017,Wallman2017,Merkel2018} it is shown that
standard RB provides an estimate of $p$.
Notice that $p$ is independent
of the basis in which the ideal gate-set of comparison, $\mc U \mbb G \mc U^\dagger$, is expressed.

From \cref{eq:total_change}, it is straightforward to show that
\begin{align}
	&\delta(m,\mc U) := \frac{ f_{\rm tr}(\tilde{\mbb G}, \mc U \mbb G \mc U^\dagger, m+1)}{ f_{\rm tr}(\tilde{\mbb G}, \mc U \mbb G \mc U^\dagger, m)}- p = \notag \\
	&\sqrt{1-a^2(\mc U)}\sqrt{1-b^2 (\mc U)} \frac{w(\mc U)^\dagger \bm \Delta^{m}(\bm \Delta-p \bm \Pi_{\rm tr}) v(\mc U)}{f_{\rm tr }(\tilde{\mbb G}, \mc U \mbb G \mc U^\dagger, m)}~, \label{eq:rapid_decay}
\end{align}
which is exponentially suppressed. {We show in the next section that the eigenvalues of $\bm \Delta$ are at most of order $\sqrt{1-p}$, which ensures a very fast decay, as shown in \cref{fig:delta}.}
\Cref{eq:marginal_law} is in fact
a reformulation of \cref{eq:rapid_decay}.

\begin{figure*}[ht]
	\centering
	\begin{subfigure}[t]{0.5\textwidth}
		\centering
		\includegraphics[width=0.9\linewidth]{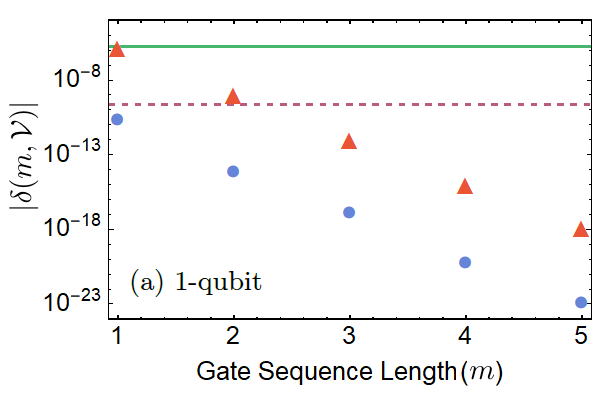}
		\label{fig:delta1}
	\end{subfigure}%
	~
	\begin{subfigure}[t]{0.5\textwidth}
		\centering
		\includegraphics[width=0.9\linewidth]{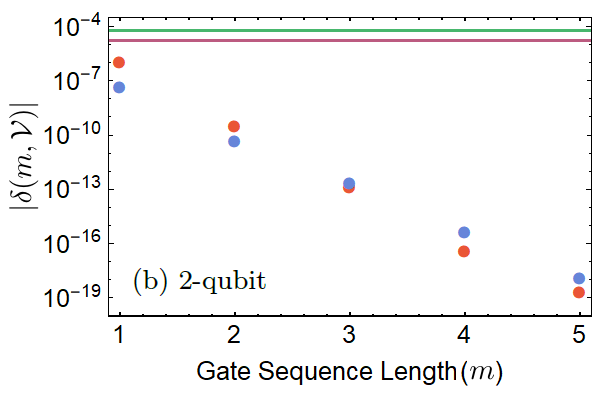}
		\label{fig:delta2}
	\end{subfigure}
	\caption{Absolute value of the deviation $\delta(m, \mc V)$
		, described in \cref{eq:marginal_law} (also see \cref{eq:rapid_decay}), as function of
		circuit length $m$ with noise model generated by
		$\tilde{\mc G}_x=\mc P(\sigma_z,10^{-1})\mc G_x$ and $\tilde{\mc G}_y=\mc P(\sigma_z,10^{-1}) \mc G_y$, $\tilde{\mc G}_{CZ}=\mc P(\sigma^1_z\sigma^2_z-\sigma_z^1-\sigma^2_z ,\pi/2 +10^{-1})$ (see \cref{eq:pulse}). The red {triangles} are obtained with the choice of basis
		$\mc V= \mc I$, while the blue {circles} are obtained with the choice
		$\mc V= \mc U$ where $\mc U$ is found through \cref{eq:optimization}.
		The purple horizontal {dashed} line corresponds to $(1-p)^2$, while
		the {full} green line corresponds to $(1-\mc F( \tilde{\mbb G}, \mbb G, 1 ))^2$.
		For both ideal gate-sets $\mbb G$ and $\mc U \mbb G \mc U^\dagger$,
		the deviation becomes quickly negligible as the sequence length increases.
		In fact, in the case $\mc V=\mc U$ (blue {circles}), the deviation is always below $(1-p)^2$.}
	\label{fig:delta}
\end{figure*}

\section{Varying the ideal gate-set of comparison} \label{sec:appendixB}
\begin{figure*}[ht]
	\centering
	\begin{subfigure}[t]{0.5\textwidth}
		\centering
		\includegraphics[width=0.9\linewidth]{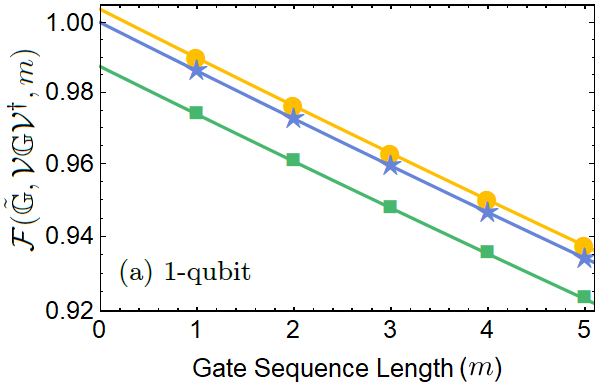}
		\label{fig: pbloch1}
	\end{subfigure}%
	~
	\begin{subfigure}[t]{0.5\textwidth}
		\centering
		\includegraphics[width=0.9\linewidth]{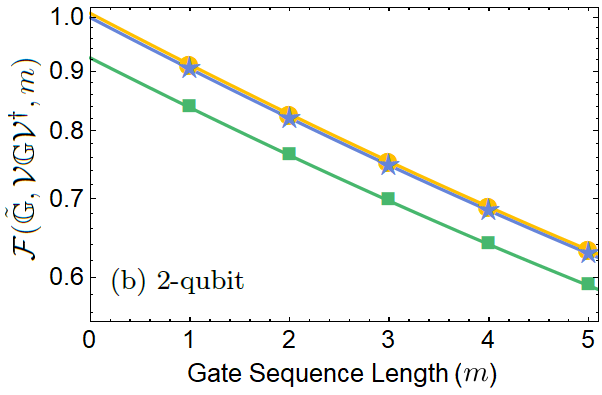}
		\label{fig: pbloch2}
	\end{subfigure}
	\caption{gate-set circuit fidelity $\mc F(\tilde{\mbb G},\mc V \mbb G \mc V^\dagger,m)$ as a function of circuit length $m$ with noise model generated by
		$\tilde{\mc G}_x=\mc P(\sigma_x,\pi/2+10^{-1})$, $\tilde{\mc G}_y=\mc P(\sigma_y,\pi/2+10^{-1})$, $\tilde{\mc G}_{CZ}=\mc P(\sigma^1_z\sigma^2_z-\sigma_z^1-\sigma^2_z ,\pi/2 + 10^{-1})$ (see \cref{eq:pulse}). The different colors portray choices of basis; the yellow circles $\mc V= \mc I$,
		the blue stars $\mc V= \mc U$ where $\mc U$ is found through \cref{eq:optimization},
		and the green squares $\mc V= \mc U^2$. Here the lines correspond to the fit for sequence lengths of m=5 to 10. The choice $\mc V= \mc U$ produces the evolution prescribed by \cref{prop:main}, which through extrapolation has an intercept of $1$.
	}
	\label{fig:pbloch}
\end{figure*}

\begin{figure*}[ht]
	\centering
	\begin{subfigure}[t]{0.5\textwidth}
		\centering
		\includegraphics[width=0.9\linewidth]{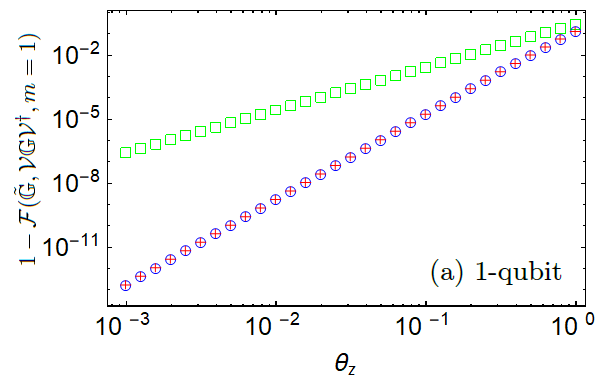}
		\label{fig:basis_adjustment1}
	\end{subfigure}%
	~
	\begin{subfigure}[t]{0.5\textwidth}
		\centering
		\includegraphics[width=0.9\linewidth]{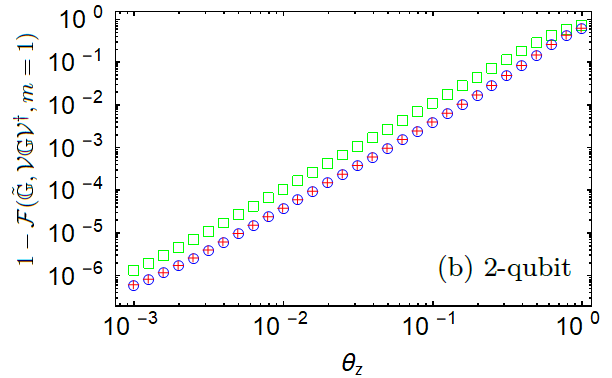}
		\label{fig:basis_adjustment2}
	\end{subfigure}
	\caption{ $1-\mc F(\tilde{\mbb G}, \mc V \mbb G \mc V^\dagger, m=1)$ as function of the angle $\theta_z$ in noise model generated by
		$\tilde{\mc G}_x=\mc P(\sigma_z,\theta_z)\mc G_x$ and $\tilde{\mc G}_y=\mc P(\sigma_z,\theta_z) \mc G_y$, $\tilde{\mc G}_{CZ}=\mc P(\sigma^1_z\sigma^2_z-\sigma_z^1-\sigma^2_z ,\pi/2 +10^{-1})$ (see \cref{eq:pulse}), 
		with $\mc V= \mc I$ (green squares) and $\mc V= \mc U$ (blue circles) where $\mc U$ is found through \cref{eq:optimization}.
		The red crosses correspond to $(1-p)/2$ obtained through RB experiments.}
		\captionsetup{justification   = raggedright,
			singlelinecheck = false, format=hang}
	\label{fig:basis_adjustment}
\end{figure*}

In this section, we prove \cref{prop:main} by determining how the basis $\mc U$
of the ideal gate-set $\mc U \mbb G \mc U^\dagger$ affects the coefficients in
\cref{eq:total_change}.

Let $\mbb G$ be an ideal gate set defined with respect to the SPAM procedures.
We can write the elements of a noisy gate-set as
\begin{align}\label{eq:perturbative_expansion_init}
	\tilde{\bc G}= \bc G + \bm \delta_{\mc G}^{(I)} \bc G~,
\end{align}
so that the perturbations $\delta_{\mc G}$ both capture the errors in the noisy
gate and the mismatch with the targeted computational basis.
Under gate-independent noise with no basis mismatch, $\tilde{\mbb G} = \mc E
\mbb G$ and the infidelity of the perturbed operations $\mc I +\delta_{\mc G}^{(I)}$ is $r(\mc E) := 1 - {\rm F}(\mc E, \mc I)$. 
A basis mismatch will change the infidelity of the perturbations roughly to
$r(\mc U\mc E)+r(\mc U^\dagger)$ for some unitary channel $\mc U$, which will typically differ substantially from the fidelity inferred from the associated RB experiment.

Experimentally, such basis mismatches will be relatively small as operations will be somewhat consistent with SPAM procedures. 
Under this assumption, we now show that there exists an alternate perturbative expansion,
\begin{align}\label{eq:perturbative_expansion_final}
\tilde{\bc G}= \bc U\bc G \bc U^\dagger + \bm \delta_{\mc G}^{(U)} \bc U\bc G \bc U^\dagger~,
\end{align}
for which $r(\mc I+ \mbb E\delta_{\mc G}^{(U)})$ is in line with
the data resulting from an RB experiment.

In \cref{sec:appendixA}, we showed that $(\bc T/p)^n$ converges to a rank-1
projector. We now quantify the rate of convergence. {Recall that $\bc T$ is perturbed from a rank-$1$ projector with spectrum $\{1,0,0, \cdots\}$. Hence,} by the Bauer-Fike
theorem~\cite{Bauer1960}, for any eigenvalue $\lambda\neq p$ of $\bc T$,
\begin{align}
	|\lambda-0| &\leq \| \mbb E [\bc G_{\rm tr} \otimes\bm\delta_{\mc G}^{(I)}\bc G] \|_2 \tag{Bauer-Fike} \\
	&\leq \mbb E \|  [\bc G_{\rm tr} \otimes\bm\delta_{\mc G}^{(I)}\bc G] \|_2 \tag{triangle ineq.} \\
	& = \mbb E \|  \bm\delta_{\mc G}^{(I)} \|_2 \tag{Unitary invariance} \\
	& \leq   O\left(\mbb E  \sqrt{r(\mc I+ \delta_{\mc G}^{(I)})}\right) \tag{\cite{Wallman2015c}} \\
	&\leq  O\left( \sqrt{  r(\mc I+ \mbb E \delta_{\mc G}^{(I)})}\right) \tag{concavity}
\end{align}
This spectral profile implies that $(\bc T/p)^n$ converges quickly to a rank-$1$ operator since the eigenvalues close to zero are exponentially suppressed.

Hence, we can approximate
the asymptotic eigen-operators defined in
\cref{eq:Ai,eq:Bi} as:
	\begin{subequations}
		\label{eq:quick_conv_eigen}
		\begin{align}
		\bc A_\infty &= \bc A_4+O(r(\mc I+ \mbb E\delta_{\mc G}^{(I)})^2)~, \\
		\bc B_\infty &=  \bc B_4+O(r(\mc I+ \mbb E\delta_{\mc G}^{(I)})^2) ~.
		\end{align}
	\end{subequations}
In the simple noise model $\mc E_L \mbb G \mc E_R$, $\bc A_\infty \propto \bm \Pi_{\rm tr} \bc E_R$ and $\bc B_\infty \propto \bc E_L\bm \Pi_{\rm tr}$. To pursue the analogy,
 we denote the $m^{\rm th}$ order right and left errors as
 \begin{subequations}
 	\label{eq:effective_rl}
  		\begin{align}
  		\bc E_R^{(m)}&=~  \mbb E \left[({\bc G}_{ m:1})^\dagger \tilde{\bc G}_{m:1} \right], \label{eq:right_error} \\
  		\bc E_L^{(m)} &= ~\mbb E \left[\tilde{\bc G}_{m:1} ({\bc G}_{m:1})^\dagger \right]~.
  		\end{align}
 \end{subequations}

Combining \cref{eq:effective_rl} and \cref{eq:quick_conv_eigen}, we get
	\begin{subequations}
		\begin{align}\label{eq:A_approx}
		\bc A_\infty &\propto \bm \Pi_{\rm tr }\bc E_R^{(4)} +O(r(\mc I+ \mbb E\delta_{\mc G}^{(I)})^2)~, \\
		\bc B_\infty &\propto  \bc E_L^{(4)} \bm \Pi_{\rm tr } +O(r(\mc I+ \mbb E\delta_{\mc G}^{(I)})^2) ~. \label{eq:B_approx}
 		\end{align}
	\end{subequations}

The structure of
single-qubit error channels allows us to pursue a deeper analysis.
It follows from the channel analysis provided in \cite{Ruskai2002} that,
for high-fidelity qubit-channels, the $3 \times 3$ submatrix acting on the traceless hyperplane can always be decomposed as
\begin{align}\label{eq:decomposition}
{\bc E \bm \Pi_{\rm tr}= \bc D \bc V \bm \Pi_{\rm tr}}
\end{align}
where $\mc V$ is a physical unitary, and $\mc D$ is an incoherent process.
Here we label a channel $\mc D$ incoherent if
\begin{align}\label{eq:incoherence}
{\frac{\inner{\bm \Pi_{\rm tr}}{\bc D}}{\|\bm \Pi_{\rm tr}\|_F^2}= \frac{\|\bc D \bm \Pi_{\rm tr}\|_F}{\|\bm \Pi_{\rm tr}\|_F}}+ O(r(\mc D)^2)~.
\end{align}
{Incoherent channels have the additional property that , given an error channel $\bm \Lambda$ \cite{Dugas2016}
	\begin{align}
		\frac{\inner{\bm \Pi_{\rm tr }}{\bc D \bm \Lambda}}{\|\bm \Pi_{\rm tr }\|_F^2} =
		\frac{\inner{\bm \Pi_{\rm tr }}{\bc D}}{\|\bm \Pi_{\rm tr }\|_F^2}\frac{\inner{\bm \Pi_{\rm tr }}{\bm\Lambda}}{\|\bm \Pi_{\rm tr }\|_F^2}
		+O( r(\mc D\Lambda)^2)~.\label{eq:incoherent_prop}
	\end{align}
}
Expressing the $4^{\rm th}$ order right error  $\bc E_R^{(4)}$ as
	\begin{align}
	{\bc E_{R}^{(4)} \bm \Pi_{\rm tr}=\bc D \bc V \bm \Pi_{\rm tr}~.}
	\end{align}
allows us to maximally correct it through a physical unitary:
\begin{align}
F(\mc E_R^{(4)}\mc V^\dagger, \mc I) = \max\limits_{\mc U} F(\mc E_R^{(4)}\mc U, \mc I) \geq F(\mc E_R^{(4)}, \mc I)~.
\end{align}
Using the property expressed in \cref{eq:incoherent_prop}, we get:
	\begin{align}\label{eq:sub_mult}
 {\frac{\inner{\bm \Pi_{\rm tr}}{\bc E_{R}^{(4)} \bc V^\dagger \bc V \bc E_{L}^{(4)}} }{\|\bm \Pi_{\rm tr}\|_F}}=& {
 \frac{\inner{\bm \Pi_{\rm tr}}{\bc E_{R}^{(4)}  \bc V^\dagger}}{\|\bm \Pi_{\rm tr}\|_F} \frac{\inner{\bm \Pi_{\rm tr}}{\bc V \bc E_{L}^{(4)}} }{\|\bm \Pi_{\rm tr}\|_F}} \notag \\
&+O(r(\mc I + \mbb E \delta_{\mc G}^{(I)})^2).
	\end{align}
Looking back at \cref{thm:total_change} and using \cref{eq:sub_mult,eq:A_approx,eq:B_approx} results in
\begin{align}\label{eq:C_i_corrected}
	C(\mc V^\dagger) =1 +O\left(r(\mc I + \mbb E\delta_{\mc G}^{(I)})^2\right)~.
\end{align}
Since both $\mc V$ and $\mc E_L^{(4)}$ have at most infidelity of order $r(\mc I + \mbb E \delta_{\mc G}^{(I)})$,
it follows that the composition $\mc V \mc E_{L}^{(4)}$ must also have an infidelity
of order $r(\mc I + \mbb E \delta_{\mc G}^{(I)})$, which guarantees
\begin{align}
	\sqrt{1-b^2(\mc V^\dagger)}= O\left(\sqrt{r(\mc I + \mbb E \delta_{\mc G}^{(I)}})\right),
\end{align}
while incoherence guarantees
\begin{align}
\sqrt{1-a^2(\mc V^\dagger)}= O\left(r(\mc I + \mbb E \delta_{\mc G}^{(I)})\right).
\end{align}
Using
\begin{align}
	|w(\mc V^\dagger)^\dagger \bm \Delta v(\mc V^\dagger)| \leq \mbb E \|\delta_{\mc G}^{(I)}\|_2 \leq  O\left(\sqrt{r(\mc I + \mbb E \delta_{\mc G}^{(I)})}\right)
\end{align}
in \cref{eq:D_i}, we find
\begin{align}
 D(1, \mc V^\dagger) = O\left(r(\mc I + \mbb E\delta_{\mc G}^{(I)})^2\right)~,
\end{align}
which, together with \cref{eq:total_change,eq:C_i_corrected} leads to
\begin{align}\label{eq:almost_done}
f_{\rm tr} (\tilde{\mbb G} ,\mc V^\dagger \mbb G \mc V, m) =
p^m + O\left(r(\mc I + \mbb E \delta_{\mc G}^{(I)})^2\right) ~.
\end{align}
This expression allows us to pick a better
perturbative expansion than \cref{eq:perturbative_expansion_init}.
Indeed, choosing
\begin{align}\label{eq:perturbative_expansion_corrected}
\tilde{\bc G}=\bc V^\dagger \bc G \bc V+ \bm \delta^{(V^\dagger)}_{\mc G} ~\bc V^\dagger \bc G \bc V~,
\end{align}
ensures that
the noisy operations $\mc I + \delta^{(V^\dagger)}_{\mc G}$
have an gate-set circuit infidelity which is more in line with
the RB data:
\begin{align}
	r(\mc I + \delta^{(V^\dagger)}_{\mc G})= \frac{d-1}{d}(1-p) + O(r(\mc I + \delta_{\mc G}^{(I)})^2)~.
\end{align}
Iterating the analysis leads to
\begin{align}\label{eq:done}
f_{\rm tr} (\tilde{\mbb G} ,\mc V^\dagger \mbb G \mc V, m) = p^m + O\left((1-p)^2\right) ~.
\end{align}
This completes the demonstration of \cref{prop:main}.

Our current proof technique relies on the structure of single-qubit channels. For higher dimensions, we conjecture that an analog of \cref{prop:main} holds,
although the scaling with the dimension is unclear.

\begin{conj}\label{conj:general}
	If the fidelity of $\mc E_R^{(4)}$ is high,
	then $\exists$ a physical unitary $\mc V^\dagger$ s.t. $\mc E_R^{(4)}\mc V^\dagger$
	is incoherent.
\end{conj}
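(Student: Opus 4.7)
The strategy is to generalize the polar-decomposition argument that establishes \cref{eq:decomposition} for qubits. Since the fidelity of $\mc E_R^{(4)}$ is high, the block of $\bc E_R^{(4)}$ acting on the traceless hyperplane can be written as $\bc E_R^{(4)} \bm \Pi_{\rm tr} = \bm \Pi_{\rm tr} + \bm \epsilon$ with $\|\bm \epsilon\|_F^2 = O(r(\mc E_R^{(4)}))$. I would first take a real polar decomposition of this $(d^2-1)\times(d^2-1)$ block, writing $\bc E_R^{(4)}\bm \Pi_{\rm tr} = \bm D \bm V \bm \Pi_{\rm tr}$ with $\bm D$ symmetric positive semi-definite and $\bm V$ orthogonal. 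The symmetric factor $\bm D \approx \bm \Pi_{\rm tr} + \tfrac{1}{2}(\bm \epsilon + \bm \epsilon^T)$ is automatically incoherent in the sense of \cref{eq:incoherence} up to $O(r^2)$ corrections, because for positive matrices close to the identity the trace equals the sum of singular values plus second-order terms.

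The core challenge is to realize $\bm V$ as the Liouville image of a physical unitary channel $\mc V$. For $d=2$ this is automatic because the adjoint representation surjects $SU(2) \to SO(3)$, which is precisely what powers the argument in \cref{sec:appendixB}. For $d \geq 3$ the orthogonal group on the traceless subspace has dimension $(d^2-1)(d^2-2)/2$, strictly larger than $\dim SU(d) = d^2 - 1$, so a generic orthogonal factor will not come from a physical unitary. My plan is therefore to pass to the level of generators: write $\bc E_R^{(4)} = \exp(\bm L)$ on the traceless subspace with $\| \bm L \|_F = O(\sqrt{r})$ and decompose $\bm L = \bm L_S + \bm L_A$ into symmetric and anti-symmetric pieces. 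The task reduces to showing that $\bm L_A$ lies in the image of the Hamiltonian-action map $H \mapsto -i [H,\cdot]$ restricted to the traceless subspace. If so, $\mc V = e^{-iH}$ with a suitable traceless Hamiltonian $H$ cancels $\bm L_A$ to leading order, leaving $\bc E_R^{(4)} \mc V^\dagger \bm \Pi_{\rm tr} \approx \exp(\bm L_S) \bm \Pi_{\rm tr}$, which is positive semi-definite and hence incoherent up to $O(r^2)$.

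The main obstacle is justifying that $\bm L_A$ lies in the Hamiltonian image for the specific operator $\bc E_R^{(4)}$. One line of attack is to exploit the structure of its definition as a Clifford average: each noisy gate admits a Lindblad-like representation whose coherent part is already a commutator action, and since Clifford twirling is a linear operation that commutes with the splitting of generators into Hamiltonian-commutator and fully-dissipative sectors, the anti-symmetric content of the averaged generator should remain in the image of $H \mapsto -i[H,\cdot]$ for some effective traceless Hamiltonian. A complementary variational route is to define $\mc V$ as an argmax of ${\rm F}(\mc E_R^{(4)} \mc V^\dagger, \mc I)$ over physical unitaries, in the spirit of \cref{eq:optimization}, and to use first-order optimality conditions together with Bauer-Fike bounds analogous to those used in \cref{sec:appendixB} to show that the residual anti-symmetric component is at most of order $r^2$. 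I expect the truly hard part to be ruling out pathological generators $\bm L_A$ that are skew-symmetric but lie outside the Hamiltonian image; such generators are mathematically permissible but may be excluded once one restricts to Markovian noise arising from dilation into an environment, which is the physical restriction implicit in the paper's supporting arguments.
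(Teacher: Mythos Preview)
Note first that this statement is explicitly a \emph{conjecture} in the paper; the authors do not prove it in general. What the paper does supply, immediately after stating the conjecture, is a constructive verification for a restricted but ``physically motivated'' class of noise models: those for which $\mc E_R^{(4)}$ factors as an alternating product $\mc U_T \mc D_T \cdots \mc U_1 \mc D_1$ of physical unitaries and incoherent channels. The argument is purely algebraic: since $\mc U \mc D \mc U^\dagger$ is incoherent for any unitary $\mc U$ and compositions of incoherent channels are incoherent, one telescopes the unitaries through to obtain $\mc E_R^{(4)} = \mc D \mc V$ with $\mc V = \mc U_{T:1}$ a genuine physical unitary. No polar decomposition, no generator analysis, no complete-positivity constraints are invoked.

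Your proposal is therefore a genuinely different and far more ambitious attack: you attempt the general conjecture rather than a special case. You have correctly diagnosed the obstruction, namely that for $d \geq 3$ the orthogonal factor of the polar decomposition on the traceless block need not lie in the adjoint image of $SU(d)$. However, your proposed resolution has a real gap. The claim that the antisymmetric part $\bm L_A$ of the generator lies in the image of $H \mapsto -i[H,\cdot]$ does not follow from Lindblad structure: the dissipative part of a Lindblad generator is \emph{not} symmetric on the traceless hyperplane in general (amplitude damping already exhibits this for qubits via its non-unital block, and in higher dimensions dissipators routinely contribute antisymmetric pieces that are not commutators). So the assertion that ``Clifford twirling commutes with the splitting into Hamiltonian-commutator and fully-dissipative sectors'' and that only the former contributes to $\bm L_A$ is unjustified. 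The variational argmax route likewise does not, by itself, force the residual antisymmetric component to be $O(r^2)$: first-order optimality over $SU(d)$ only kills the component of $\bm L_A$ lying in the Hamiltonian image, leaving the orthogonal complement untouched. What you call ``the truly hard part'' is precisely the content of the conjecture, and your proposal does not close it; this is consistent with the paper leaving it open and retreating to the structured-noise argument instead.
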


As we now show constructively, \cref{conj:general} holds for physically
motivated noise models composed of generalized dephasing, amplitude damping,
and unitary processes. Under such noise models,
\begin{align}\label{eq:composite_A}
	\mc E_R^{(4)}= \mc U_T \mc D_T \cdots \mc U_2 \mc D_2 \mc U_1 \mc D_1
\end{align}
for some unitaries $\mc U_i$ and incoherent channels $\mc D_i$.

The channel $\mc U \mc D \mc U^\dagger$ is incoherent for any physical unitary
$\mc U$, and the composition of incoherent channels is also incoherent, so
\cref{eq:composite_A} can be rewritten as $\mc E_R^{(4)}= \mc D \mc V$, where $\mc D$ and $\mc V$
are incoherent and unitary respectively:
\begin{align}
\mc D &= ({\mc U_{T}}\mc D_T {\mc U_{T}}^\dagger) \cdots(\mc U_{T:1} \mc D_1 {\mc U_{T:1}}^\dagger ) \\
\mc V &= \mc U_{T:1}~.
\end{align}

\section*{Acknowledgments}
The authors acknowledge helpful discussions with {Timothy J. Proctor} and {Hui Khoon Ng}. This research was supported by the U.S. Army Research Office through grant W911NF-14-1-0103. This research was undertaken thanks in part to funding from TQT, CIFAR, the Government of Ontario, and the Government of Canada through CFREF, NSERC and Industry Canada.

\bibliography{qcvv}

 \end{document}